\newtheorem{theorem}{Theorem}[section]
\newtheorem{remark}[theorem]{Remark}
\newtheorem{proposition}[theorem]{Proposition}
\newtheorem{example}[theorem]{Example}
\newtheorem{definition}[theorem]{Definition}
\numberwithin{equation}{section}
\begin{document}

\title[Graph distances for determining entities relationships]{Graph distances for determining entities relationships: a topological approach to fraud detection}

\author[J.M. Calabuig, H. Falciani, A. Ferrer, L.M. Garc\'{\i}a and E.A. S\'anchez]{J.M. Calabuig, H. Falciani, A. Ferrer-Sapena, L.M. Garc\'{\i}a-Raffi and E.A. S\'anchez-P\'erez}

\address{
Instituto
Universitario de Matem\'atica Pura y Aplicada,  Universitat
Polit\`ecnica de Val\`encia\\ Camino de Vera s/n, 46022 Valencia. Spain.}

\email{jmcalabu@upv.es, rvfalciani@gmail.com, anfersa@upv.es,
lmgarcia@upv.es,
easancpe@upv.es}

\subjclass[2010]{Primary: 05C12; Secondary: 54A10}

\keywords{Graph distance, fraud detection, quasi-pseudo-metric, density, mass concentration, model}


\maketitle

\begin{abstract}
Given a set $\Omega$ and a proximity function $\phi: \Omega \times \Omega \to \mathbb R^+$, we define a
new metric for $\Omega$ by considering a path distance in $\Omega$, that is considered as a complete graph. We analyze the properties of such a distance, and several procedures for defining the initial proximity matrix $( \phi(a,b) )_{(a,b) \in \Omega \times \Omega}.$
Our motivation has its roots in the current interest in finding  effective algorithms for detecting and classifying relations among elements of a social network. For example, the analysis of a set of companies working for a given public administration or other figures in which automatic fraud detection systems are needed. Using this formalism, we state our main idea regarding fraud detection, that is founded in the fact that fraud can be detected because it produces a meaningful local change of density in the metric space defined in this way.
\end{abstract}

%
%
%
%
%

%

\section{Introduction}

The great increase in relations between financial actors due to the widespread use of social networks has opened the door to a new form of social organization, which can generate solid and powerful structures for committing financial fraud.
In parallel, the development of the same technical tools that permit to establish these criminal networks allow to  create new procedures to detect them. Indeed,
fraud detection is a current hot topic appearing daily in the news, and this produces a high demand of theoretical and practical mathematical instruments for fighting against fraud.

Since the mid-twentieth century, a consistent theoretical framework has been developed from the social sciences. The most powerful approach from this point of view seems to be the so called Fraud Triangle theory, that have show to be useful also in applications (see for example \cite{dorminey,mansor,trompeter}). This theoretical framework makes it possible to understand that fraud has its roots in psychological events determined by cultural structures and must finally be understood as a sociological fact (see diagram below). Therefore, social relations are those that make fraud processes visible, and can be formally analyzed using appropriate models that have to use large amounts of information and relational mathematical tools (\cite{mock,trompeter2}). Both are available today, so we are prepared to bring all these elements together to build general theoretical developments as well as computational tools for fraud detection.

\vspace{1cm}

\xymatrix{
\textbf{FRAUD TRIANGLE:}\,\,\,\ar[rr]^{\,\,\,\,\,\color{red}{Justification}}
\ar@{.>}[d]_{{\textit{\color{red}{{Motivation}}}} } & &  \ar@{.>}[d]_{\color{red}{Realization}}\textbf{\color{blue}{Rationalization}} \\
\textbf{\color{blue}{Social pressure}} \ar@{.>}[rr]^{\textit{\color{red}{Financial needs\,\,\,\,\,\,\,\,}}}& & \textbf{{\color{blue}Opportunity:} FRAUD}
}

\vspace{1cm}

The aim of this paper is to explain a new topological approach to understanding and detecting the processes of fraud. As we have already pointed out, the big amount of information that the new technologies bring into the scene have changed the way a scientist can understand the fraud as a mathematical phenomenon: invoices, emails, company registers, provide highly meaningful information that may help the analyst to detect  evidences of fraud. The extraordinarily large set of data that accompanies any fraud process makes necessary to change the usual analysis tools, traditionally based on the  study of the lawyers and the analysis of economists of related documents. New ways of understanding and new computing tools are clearly needed, and the theoretical development of the associated mathematical models has to grow together. Therefore, our idea is to propose a new model based on a topological graph approach to the analysis of networks.

Several mathematical theories have been already applied to fraud detection, involving quite different approaches: data science \cite{ngai,wang,zhao},    game theory \cite{wilks}, statistical analysis and machine learning \cite{ab,perols} and graph theory \cite{szarnyas} are some of them (see also  \cite{bolton,rich}). One of the most successful theories has proven to be the graph-based analytical approach, which has already given some programs for fraud detection, as Neo4j.  In this paper we propose a new technique for defining a model by means of quasi-pseudo-metrics for  complete graph structures. The vertices/nodes are the elements that have to be analyzed: persons, entities, companies, invoices or emails, for instance. Starting with a graph  with edges among vertices  having a finite set of properties, we establish a way for defining a family of quasi-pseudo-metrics for translating  the graph to a topological space.  To facilitate the explanation of the model, we will simplify our ideas in this paper  by assuming some requirements to ensure that the final quasi-pseudo-metric is in fact a metric.
 We will call such a structure a ``metric graph", and the topology will be constructed using quasi-pseudo-metrics (see for example \cite{kunzi,reilly}
for the basics). Once we can define  neighborhoods of vertices, we use the topological properties to characterize the relevant elements of the space, that have to become the main objects of the anti-fraud analysis. Besides the topological space, we need an additive set function acting in the class of all subsets of the original set of nodes ---a measure---  for helping to evaluate
the ``size" ---given in terms of number of elements, weighted means, or similar mathematical features--- of the neighborhoods of the nodes. Together, both tools (metric and measure) allow to define the fundamental object of our model: the density of the family of neighborhoods of  a given node.

The abstract main supporting idea of our model is that  fraud can be detected by searching for \textit{unusual concentration of mass phenomena} in a specifically defined topological graph. Broadly speaking, it can be established in the following terms: \textit{ the ``map of density" of a graph should follow an easy-to-recognize pattern. If no previous information on the pattern is available, then the hypothesis must be that the relevant vertices ---the ones that must focus the attention of the anti-fraud analysts---  are the ones in which there is an anomalous density distribution. In other words,  the uniform density distribution is assumed as reference pattern. Small local densities as well as big local densities should indicate a ``hot node" in terms of corruption, and would allow to classify the different schemes of fraud. }

In this article ---of mathematical nature--- we firstly present the mathematical structure, showing at each step examples that would help the reader to follow the development of the model. The main ideas will be shown in the central part of the paper. Some examples and applications are explained in the final part.

Let us introduce some technical formal concepts. We use standard mathematical notation.  We will construct our models by starting with a set $\Omega$ of entities, that will be considered as the vertices of a complete graph. The edges of the graph will be weighted for the definition of a metric in it.
We will write $\mathbb R^+$ for the set of non-negative real numbers.
A quasi-pseudo-metric  on a set $\Omega$ (\cite{kunzi,reilly}) is a function $d: \Omega \times \Omega \to \mathbb R^+$ satisfying that for $a,b,c \in \Omega$,
\begin{enumerate}
\item $d(a,b)=0$ if  $a=b$, and
\item $d(a,b) \leq d(a,c)+d(c,b)$.
\end{enumerate}

Such a function is enough for defining a topology by means of the basis of neighborhoods that is given by the open balls. If $\varepsilon >0$, we define the ball of radius $\varepsilon$ and center in $a \in \Omega$ as
$$
B_\varepsilon(a):= \Big\{ b \in \Omega: d(a,b) < \varepsilon \Big\}.
$$
Note that this topology is in fact given by the countable basis of neighborhoods provided by the balls $B_{1/n}(x)=\{ y \in X: d(x,y) \le 1/n \}$, $n \in \mathbb N$.
The resulting metrical/topological structure $(\Omega,d)$ is called a quasi-pseudo-metric space.

If the function $d$ is symmetric, that is, $d(a,b)=d(b,a)$, then it is called a pseudo-metric. If $d$ can be used for separating points ---that is, if $d(a,b)=0=d(b,a)$ only in the case that $a=b$--- but it is not necessarily symmetric, then it is called a quasi-metric. Finally, if both requirements hold ---symmetry and separation---, $d$ is called a metric (or a distance). In this case, the topology generated by the balls is Hausdorff.  These notions have been already used in several applied contexts; let us mention for example  the design of semantic computational tools
(\cite{romaguera,valero})
or the analysis of complexity measures in theoretical computer science (\cite{garcia,garcia2}).

\section{Mathematical structures for detection of fraud in public administration and business.}

In this section we introduce the general framework for understanding the fraud processes into a mathematical structure.  Although our objective is to construct a model based on metrics, our goal is to
open the door to the possibility of applying reinforcement learning tools for fraud analysis. Several researchers have recently used machine learning methods for financial fraud detection (see
\cite{ab,wh,y}). Although we have used some ideas from these documents and related ones (see also references in them), our techniques are new, and we are not yet introducing these artificial intelligence tools into this document.  This task will be the next step in our research program.

As we said in the introduction, we will mix for our model a basic graph structure together with some topological tools, that are introduced by means of a quasi-pseudo-metric. In our formalism, in principle the
graphs used are assumed to be complete, but this is not a restriction: we can assign  weights to the edges, so we can ``almost cancel" relationships by using very large distances between vertices.
Graph-based constructions have been already used for fraud detection, although without the explicit introduction of metric elements (see for example \cite{ako,ebe,hoo}). The way we introduce the metric and its role in the model is our main contribution.

  Let  $\Omega$ be a set of objects of the same class related to the representation of individuals  of a system.
Of course, there is a lot of different ways of defining a metric in a set depending on the supplementary
structure that the set is assumed to have (\cite{deza}).
 Typically, the set of entities in our fraud detection model  is represented by vectors containing information of different type, each class in each coordinate. A vector $v$ in this class (belonging to a subset $\Omega$ of a vector space $V$) is univocally associated to an individual: for example, the set $\Omega$ may be composed by invoices of a given year paid by a public administration. Figure \ref{fig1} shows an scheme of the graph  with Neo4j; although the graph represented there is not complete, it is assumed to be complete for the computation of the distance. Each vector may be given by the attributes  of the invoice, for example,
First coordinate= date of payment,
Second coordinate= total amount paid,
Third coordinate= name of the company,
that is,
$$
v=
\big( \text{date of payment, }
\text{total amount paid,}
\text{ name of the company}
\big).
$$
\vspace{.1cm}

\begin{figure}[h]
\includegraphics[scale=.39]{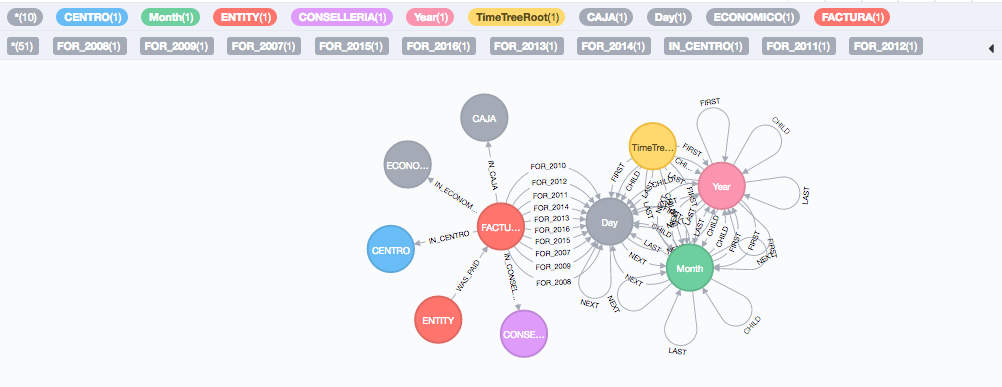}
\caption{An example of a (non-complete) graph for the analysis of invoices in a certain public administration.  }
\label{fig1}
\end{figure}

\vspace{0.2cm}

Let us consider now a quasi-pseudo-metric $d$ in the set $\Omega$.  The explanation of different systematic  procedures for defining it will be given in the next section. In the model it has to represent the proximity of different elements of $\Omega$ among them, and the definition has to make sense for measuring the economic activity (or other kind of relevant activities) related to the process that is being analyzed. For instance, in the previous example a reasonable distance will be given by the following function. Let $v=(x_1,x_2,x_3)$ and $ w=(y_1,y_2,y_3)$ be elements of $\Omega$. We define
$$
d(v,w)=d_1(x_1,y_1) + d_2(x_2,y_2) + d_3(x_3,y_3),
$$
where $d_1(x_1,y_1)= |x_1-y_1|,$  $d_2(x_2,y_2)=|x_2-y_2|$ and  $d_3(x_3,y_3)= 0$ if the invoices $v$ and $w$ were paid to the same company, and $d_3(x_3,y_3)= 1$ otherwise. This clearly defines a distance.

Let us explain other example with some details.

\begin{example} \label{easyex1}
The set of objects $\Omega$ is defined by companies involved in providing services to the public administration in a given year. Each of them is represented by a vector defined by

\begin{itemize}

\item
First coordinate= total amount paid to the company (in K Euros).

\item
Second coordinate= number of services provided by the company.

\item
Third coordinate= geographical location of the company (first coordinate of the position vector).

\item
Fourth coordinate= geographical location of the company (second coordinate of the position vector).

\end{itemize}

This set would be considered by the analyst an ``adequate system", in the sense that it would contain enough information for detecting an anomalous behavior. We identify each company with its representing vector, that is, $\Omega$ is a subset of $\mathbb R^4$. We have to measure the distance among the elements that are considered here. The first obvious choice is to measure the Euclidean distance among vectors, that is if $v_1 ,v_2 \in \Omega$,
$$
d(v_1,v_2)= \Big\| v_1- v_2 \Big\|_2,
$$
where $\big\| \cdot \big\|_2$ denotes the Euclidean norm in $\mathbb R^4.$ However, this option provides an information that only allows to compare companies among them, and grouping them by similarity of activity and location. A priori, it does not seem to be useful for fraud detection.

 A more subtle option would be the following. Consider the seminorms
$$
p_E(x_1,x_2,x_3,x_4)= \Big\| (x_1,x_2,0,0) \Big\|_2, \quad v=(x_1,x_2,x_3,x_4) \in \mathbb R^4,
$$
and
$$
p_L(x_1,x_2,x_3,x_4)= \Big\| (0,0,x_3,x_4) \Big\|_2, \quad v=(x_1,x_2,x_3,x_4) \in \mathbb R^4.
$$

Both of them are seminorms, and so the formulas $d_E(v_1,v_2)=p_E(v_1-v_2)$ and $d_L(v_1,v_2)= p_L(v_1-v_2)$ define pseudo-metrics ($d(v_1,v_2)=0$,  does not necessarily imply $v_1=v_2$). The first one allows grouping companies by similar economic activity ---that is, a small neighborhood of a company/vector $v$ contains companies with similar economic relation with the public administration. Also, a big value of $p_E(v)$ in comparison with the values of $p_E$ of other companies indicates a big economical activity, that would be an indication either of  fraud or risk of fraud. The second one ---$d_L$--- would be used for
detecting changes of names of the same company for hiding an unusual recruitment with the public administration of a  single company.

\end{example}

\vspace{0.5cm}

Let us define now   two more structures. Consider the $\sigma$-algebra $\mathcal B$ of Borel sets of $(\Omega, d)$ ---typically, $\Omega$ will be a finite set and $\mathcal B$ will be the class  $2^\Omega$ of all the subsets of $\Omega$---. Consider a Borel measure $\mu: \mathcal B \to \mathbb R^+$. On the other hand, consider
also a function $\psi: \Omega \times \mathbb R^+ \to \mathbb R^+$ that is increasing with respect to the second variable. It will be considered as a radial weight associated to the radius of the balls for the metric topology.

\begin{definition}
Let
$F(\mathbb R^+, \mathbb R^+)$ be the set of real non-negative functions acting in the positive real numbers.
We define the \textit{density function}  $\mathcal F$ as the function-valued map
$$
\mathcal F: \Omega \times \mathbb R^+ \to F(\mathbb R^+, \mathbb R^+)
$$
 given by
$$
(a, \varepsilon) \mapsto \mathcal F(a, \varepsilon)=f_a(\varepsilon):= \frac{\mu (B_\varepsilon(a))}{\psi(a,\varepsilon)}.
$$
\end{definition}

\begin{remark}  \label{otroex2}
Let us explain a ---in a sense canonical--- example of this notion. Consider a finite set of companies $\Omega=\Omega_0$ in the setting of Example \ref{easyex1}. Take $\mu(\cdot)=| \cdot |$ to be the counting measure on the $\sigma$-algebra of all finite subsets $2^{\Omega_0},$ and  $\psi(a, \varepsilon) = \varepsilon^4$ for all $a \in \Omega_0$ ---the power $4$ for representing the magnitude of a ``volume" in a space of $4$-dimensions---. The metric $d$ is the one defined in the first part of this example. In this case,
$$
\mathcal F(a, \varepsilon)=
f_a(\varepsilon)
$$
$$
= \frac{|B_\varepsilon(a)|}{\varepsilon^4}
= \frac{1 }{\varepsilon^4} \times \, \Big( \text{number of companies in $\{b \in \Omega: \|b-a\|_2 < \varepsilon\} $}\Big).
$$
This formula is clearly defining a density-type parameter: it is given by a ratio among ``number of things" in a given volume of the space and the ``size" of such volume.
\end{remark}

We are prepared now to define the main concept of this paper.

\begin{definition} \label{conmass}
Let $r > 0$.
We define the \textit{concentration of mass } (out of a neighborhood of the element $a$ of size $r$), or the \textit{local density around $a$,} as the function
$C_r:\Omega \to \mathbb R^+ \cup +\infty$ given by
$$
C_r(a)= \int_r^{+ \infty}  f_a( \varepsilon) \, d \nu(\varepsilon), \quad  a \in \Omega,
$$
where $\nu$ is (another countably additive) Borel measure on $(0,\infty)$.
\end{definition}

For $\nu$,  we are thinking for example on a Dirac's delta of a given value $\varepsilon_0 >0$, or Lebesgue measure $d \varepsilon.$ Note that the requirement $r>0$ is imposed to assure the convergence of the integral, at least in the canonical case explained in Remark \ref{otroex2}. In the standard finite case, if $d$ is a distance, it can be taken as the minimum of all the pairwise distances in the set $\Omega$ not being $0$, assuring in this way that $B_r(a)$ contains just an element for any $a \in \Omega$.

\vspace{0.3cm}

The central methodological idea of the present paper is that fraud detection can be considered as a systematic procedure for finding ``outliers" in a quasi-pseudo-metric space.  Indeed,  fraud can be modeled as a \textit{ concentration of mass phenomenon: that is, elements $a \in \Omega$ are associated to processes that are ``suspicious of fraud"  if  $C_r(a)$ has an unexpected  value ---that is, either ``too high" or ``too low" when comparing with the mean value---.} Each of these deviations can be interpreted in different terms, providing different figures of fraud.

It must be taken into account that special elements in the system may have  high values of $C_r$ and this situation can be considered as ``normal": for instance, if there is only one company providing a given service; or, the name of the responsible of the public administration would appear in all the invoices.

Although the way of measuring local density given in Definition \ref{conmass} seems to be the most adequate to the original problem, other ways of measuring this magnitude would make sense. For instance, for the discrete case we can compute the supremum of the  size of the balls $r$ for which the ball contains only its center $a$, that is
$$
r_{\max}(a):= \sup \{r >0: |B_r(a)|=1\},
$$
that coincides with the minimum distance to the closer element of the space, that is
$$
r_{\max}(a) = \min \{d(a,b): b \in \Omega, \, b \ne a \}.
$$
Note that in this case, a big value of $r_{\max}$ means small density.

In the examples in this section it has been used the Euclidean norm in the finite dimensional spaces for constructing the underlying topological structure. This way of measuring the distances is easy and provides directly a metric in the set $\Omega$. However, this is not the best option in general, and an alternate method for defining metric structures is required. The reason is that often the indexes that are naturally used for indicating the distance among elements of $\Omega$ are not metrics; in fact, they are not quasi-pseudo-metrics. Let us explain this relevant point with an example.

Suppose that $\Omega$ is a set of person in a social net, and we have a function $\phi$ that ``measures" the ``level of familiarity" among the elements of $\Omega$ in the following way: $\phi(a,b)=1$ if $a$ an $b$ are close friends,  $\phi(a,b)=2$ if $a$ an $b$ are friends but they meet occasionally,  $\phi(a,b)=3$ if $a$ an $b$ are just acquaintances, and  $\phi(a,b)=4$ if $a$ an $b$ never met. It may clearly happen that $a$ is a close friend of $b$, $b$ is a close friend of $c$, but $a$ and $c$ are only acquaintances; that is
$$
3= \phi(a,c) > \phi(a,b)+ \phi(b,c)= 1+1,
$$
and so the triangular inequality does not hold. This means that $\phi$ is not a quasi-pseudo-metric, but a natural function for measuring social distances.

We will solve this problem by defining a general rule for generation of quasi-pseudo-distances by means of the notion of proximity function, that will be introduced in the next section. As we will see there, the function $\phi$ above is a canonical example of such a proximity function.

\section{The general scheme  of graph quasi-pseudo-metrics for fraud detection}

Distances have often been used for graph analysis in different contexts where graph theory is applied. However, this use is made for comparison between graphs, sometimes also for fraud detection. Metrics are defined to measure the distance between two graphs, not to measure distances between vertices within a given graph (see for example
\cite{cha,chu,gao}).
In this section we are interested in defining a general procedure for analyzing relations inside a graph $\Omega$ defined by ``entities"  (including for example  persons or companies) using the information appearing in text documents, considering that as sets of emails, contracts, invoices and so. The way of doing this is to construct a distance in the graph by means of these elements. Since the very beginning of the modern graph theory, the introduction of a metric in the graph for studying its properties has been used as a  relevant tool
\cite{gra,hak}. In our case, the design of the metric is directly related to the application of the model for fraud detection. Some concrete models based on similar ideas have been recently published for particulars aspects
of fraud detection, as financial reporting fraud \cite{gla}.

Our method follows the next steps.

\vspace{0.3cm}

\begin{itemize}

\item[1)] Detection and definition  of a non-ambiguous set of entities for starting the analysis. For doing this, the analyst has to choose it, and a specific setting should be performed for a fixed kind of fraud. Automatic processes can also be used: for example, semantic parsing techniques provided by the Stanford group  could be applied as well as neural networks for training the searching system (see the references for example \cite{stan}).

\vspace{0.3cm}

\item[2)] Definition of the matrix associated to a \textit{proximity function.}  This is a function $\phi: \Omega \times \Omega \to \mathbb R^+$ that describes by means of a non-negative real number a
relation among the entity $a$ and the entity $b$, both of them in $\Omega$, which represent how far the individuals ---``entities"--- are connected as elements of the network concerning the economical/administrative activities. A small value of $\phi(a,b)$ means that both $a$ and $b$ can be often found as parts of the same activity/business; a big one, that there is not such a relation. Although the function is supposed to be bounded (typically, by $1$), it is not assumed to be a distance. However, it may be assumed to be symmetric and $\phi(a,b)=0$ if and only if $a=b$, and so it only fails subadditivity for being a metric; such functions are sometimes called semimetrics.

\vspace{0.3cm}

\item[3)] Definition of a distance on the set $\Omega$ by using a ``triangular gauge" for $\phi$, that is, a new function $d: \Omega \times \Omega \to \mathbb R^+$ that satisfies that
\begin{itemize}
\item[a)] it is a metric,

\item[b)] and for all $a,b \in \Omega,$ $d(a,b) \le \phi(a,b)$.

\end{itemize}
Of course, for this to be true we need a proximity function $\phi$ that is symmetric and separates points.
In particular, $d(a,b)=0$ if and only if
$a=b$.

We will explain later on how to define explicitly such a function $d$ given a function $\phi$. In fact, the method that we propose is the main contribution of the present work, and has been performed in a specific way for solving the problem that we explained above and we originally faced.

\end{itemize}

\subsection{The triangular gauge of a proximity function $\phi$.}  \label{forcanon}
For the construction of such a gauge, given a function $\phi$ with the requirements explained above we use a path-distance-like definition
by considering a path distance in the global graph $\Omega$, in which all the
vertices are assumed to be  connected ---a complete graph---. We analyze the properties of such a metric, and several procedures for defining the initial proximity matrix $( \phi(a,b) )_{(a,b) \in \Omega \times \Omega}.$

In Section 15.1 in \cite[p.276]{deza}, a weighted path metric for a connected graph is defined as follows. If $e$ is an edge of the graph, write $w(e)$ for the value of a positive weight; $w$ is so assumed to be a (real positive) function acting in the set of edges of the graph.
The path distance $d_G$ among to vertices $a$ and $b$ of the graph is given by
$$
d_G(a,b):= \inf  \Big\{\sum_{e_i \in P} w(e_i) \Big\},
$$
where the infimum is computed over all paths $P=\{e_i:i \in I_P\}$ that allow to go from $a$ to $b$.

We are interested in a construction that is similar to (but not equal to) a weighted path metric defined on the set of all the vertices of a connected graph. In our case all couples of elements of the set are assumed to be directly connected by an edge, that is, the graph is complete. Consider a non-increasing sequence $W:=(W_i)_{i=1}^\infty$ of positive real numbers, all of them less or equal to one. Given two points $a,b \in \Omega$, we define
a function acting in $\Omega \times \Omega$ by
$$
d_\phi(a,b)= \inf \bigg\{ W_1 \phi(a,b), \, \inf \Big\{ W_2 \big( \phi(a,c)+ \phi(c,b) \big): a \ne c \ne b, \, c \in \Omega \Big\}, ...
$$
$$
...,
\inf \Big\{ W_n \big( \phi(a,c_1) + \sum_{i=1}^{n-2} \phi(c_i,c_{i+1})+ \phi(c_{n-1},b) \big), \, a \ne c_1, \, c_i \ne c_{i+1}, \, c_{n-1} \ne b  \Big\}, ... \bigg\}.
$$

A simple calculation shows the next result.

\begin{proposition}
The function $d_\phi(a,b)$ defined above is a pseudo-metric on $\Omega.$ Moreover, if $\Omega$ is finite
and there is a constant $Q>0$ such that  $ 1/i \le Q \, W_i$ for $i=1,...,\infty,$ then $d_\phi$ is a metric.
\end{proposition}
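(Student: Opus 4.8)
The plan is to treat the two assertions in turn: first to check that $d_\phi$ is a pseudo-metric under the standing hypotheses on $\phi$ (non-negative, symmetric, vanishing exactly on the diagonal), and then to upgrade this to a metric using the decay bound on $W$. For the pseudo-metric part I would verify the defining properties one at a time. Non-negativity and finiteness are immediate, since every term in the infimum is a positive weight $W_i$ times a finite non-negative sum of values of $\phi$; in particular $d_\phi(a,b) \le W_1 \phi(a,b) < \infty$. The identity $d_\phi(a,a)=0$ follows from the very first term $W_1 \phi(a,a)=0$ together with non-negativity. Symmetry is obtained by reversing a path $a,c_1,\dots,c_{n-1},b$ to $b,c_{n-1},\dots,c_1,a$: this preserves the number of edges (hence the weight $W_n$) and, because $\phi$ is symmetric, preserves the sum of $\phi$-values along it; the reversal is a bijection between admissible $a\to b$ paths and admissible $b\to a$ paths, so the two families of path-weights coincide and so do their infima.

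The substantive point is the triangle inequality $d_\phi(a,b)\le d_\phi(a,c)+d_\phi(c,b)$, and this is where I expect the only real work. The idea is concatenation: given a path from $a$ to $c$ with $m$ edges and $\phi$-sum $S_1$, and a path from $c$ to $b$ with $k$ edges and $\phi$-sum $S_2$, their concatenation through $c$ is a path from $a$ to $b$ with $m+k$ edges and $\phi$-sum $S_1+S_2$. One must check that this concatenation is admissible, i.e. that consecutive vertices remain distinct; inside each factor this is inherited, and at the junction vertex $c$ it follows from the endpoint conditions of the two factors, while the degenerate cases $a=c$ or $c=b$ render the inequality trivial and can be set aside. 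Since $W$ is non-increasing and $m+k\ge\max\{m,k\}$, we have $W_{m+k}\le W_m$ and $W_{m+k}\le W_k$, whence
$$
W_{m+k}(S_1+S_2)=W_{m+k}S_1+W_{m+k}S_2 \le W_m S_1 + W_k S_2 .
$$
The left-hand side is one of the admissible path-weights for $d_\phi(a,b)$, so it dominates $d_\phi(a,b)$; taking the infimum over the two factor paths on the right then yields the triangle inequality. The crucial ingredient is precisely the monotonicity of $W$, which is what lets the single factor $W_{m+k}$ split so as to absorb the two separate weights $W_m$ and $W_k$.

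For the metric statement it remains to establish separation, namely $d_\phi(a,b)>0$ whenever $a\ne b$. Here I would use finiteness of $\Omega$ together with the fact that $\phi$ separates points to set $\delta:=\min\{\phi(x,y):x,y\in\Omega,\ x\ne y\}>0$. Any admissible path from $a$ to $b$ with $a\ne b$ has some number $n\ge 1$ of edges, each joining two distinct vertices, so its $\phi$-sum is at least $n\delta$, and its weight is therefore at least $W_n\cdot n\delta$. The hypothesis $1/i\le Q\,W_i$ gives $nW_n\ge 1/Q$, so every admissible path-weight is at least $\delta/Q$, and hence $d_\phi(a,b)\ge \delta/Q>0$. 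Combined with the pseudo-metric properties already proved, this shows $d_\phi$ is a metric.

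Overall I expect the triangle inequality to be the only delicate step, the difficulty being bookkeeping rather than depth: one needs the admissibility of the concatenated path at the junction and the monotonicity of $W$ working in tandem. The separation argument, by contrast, is a short uniform estimate that becomes routine once the positive lower bound $\delta$ (from finiteness) and the decay condition $1/i\le Q\,W_i$ are combined.
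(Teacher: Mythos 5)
Your proposal is correct and follows essentially the same route as the paper: concatenation of near-optimal (or arbitrary) paths through the middle point combined with the monotonicity bound $W_{m+k}\le\min\{W_m,W_k\}$ for the triangle inequality, and the uniform lower bound $W_n\cdot n\delta\ge\delta/Q$ (the paper's $k/Q$) for separation in the finite case. The only differences are cosmetic: you take infima over both factor paths directly where the paper uses an $\varepsilon$-approximation of each infimum, and you spell out symmetry, admissibility at the junction, and the degenerate cases, which the paper leaves implicit.
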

\begin{proof}
A simple look to the formula shows that $d_\phi$ is symmetric due to the symmetry of $\phi.$
Let us show the triangular inequality. Take $a,b,c \in \Omega$ and fix $\varepsilon >0.$ Suppose that the infimum in $d_\phi(a,b) $ and  $d_\phi(b,c) $ is attained ''up to $\varepsilon >0$" for
$$
W_{n_0} \big( \phi(a,c_1) + \sum_{i=1}^{n_0-2} \phi(c_i,c_{i+1})+ \phi(c_{n-1},b) \big), \, a \ne c_1, \, c_i \ne c_{i+1}, \, c_{n_0-1} \ne b
$$
and
$$
W_{n_1} \big( \phi(b,c'_1) + \sum_{i=1}^{n_1-2} \phi(c'_i,c'_{i+1})+ \phi(c_{n-1},c) \big), \, b \ne c'_1, \, c'_i \ne c'_{i+1}, \, c'_{n_1-1} \ne c,
$$
respectively. Now take $n_2= n_0+n_1-1$ and  the sequence $a,c_1,...,c_{n_0-1},b,$ $c'_1,...,c'_{n_1-1},c,$ that satisfies the requirement  that each element is different from the previous one. It contains $n_2$ elements, and so using the fact that
$$
W_{n_2} \le \min \{W_{n_0}, W_{n_1} \}
$$
 we have that
$$
d_{\phi}(a,c)  \le d_{\phi}(a,b) + d_{\phi}(b,c) + 2 \varepsilon.
$$
Since $\varepsilon >0$ is arbitrary, we get that $d_\phi$ satisfies the triangular inequality.

For the second statement, just note that, since $\Omega$ is finite and $\phi$ separates points, we have that there is a constant $k $ such that
$$
0< k < \phi(a,b) \quad \textit{for all} \,\,\, (a,b) \in \Omega \times \Omega.
$$
Therefore, if $a \ne b$ we have
$$
d_\phi(a,b) \ge  \inf \{  k \cdot W_i \cdot i: i \in \mathbb N\} \ge  \frac{k}{Q} >0.
$$
This proves that $d_\phi$ is indeed a metric.

\end{proof}

In what follows we will use the particular case given by the weights sequence $W=(1/i)_{i=1}^\infty,$ and so the distance function is defined by
$$
d_\phi(a,b)= \inf \bigg\{  \phi(a,b), \,\,\, \inf \Big\{\frac{ \phi(a,c)+ \phi(c,b) }{2}: a \ne c \ne b \Big\},
$$
$$
\inf \Big\{ \frac{  \phi(a,c_1)+ \phi(c_1,c_2) + \phi(c_2,b)  }{3}: a \ne c_1 \ne c_2 \ne b \Big\}, \,
$$
$$
...,
\inf \Big\{ \frac{  \phi(a,c_1) + \sum_{i=1}^{n-2} \phi(c_i,c_{i+1})+ \phi(c_{n-1},b) }{n},  a \ne c_1,  c_i \ne c_{i+1},  c_{n-1} \ne b  \Big\} ... \bigg\}.
$$

Note that for computing this infimum we have to deal with an infinite set of numbers. However, if $|\Omega|$ is finite we can approximate the distance by restricting the previous formula to the first $n$ terms appearing in the infimum  ---approximation of order $n$---. The following scheme shows the procedure to define an approximation of order 2 to the metric matrix of the model using the formula above.

\vspace{0.2cm}

\begin{algorithm}
\begin{algorithmic}[1]

\STATE   Fix a set $\Omega=\{a:  \textit{$a$ is an entity in the fraud model} \} \ne \emptyset$.

\REQUIRE  $ 2 \le |\Omega| < \infty$

\WHILE {$a_k \in \Omega$}

\STATE
For $i \in \{1,\cdots,|\Omega|\}, \,\, \textit{compute} \,\, a_i^k =\phi(a_k,a_i)\in \mathbb R^+.$

\STATE Define $\,\, \Phi^k = (\phi(a_k,a_i))_i.$


\STATE $D(1)_\phi(a_k,a_i) \leftarrow \phi(a_k,a_i).$

\STATE $D(2)_\phi(a_k,a_i) \leftarrow  \min \big\{ \frac{\phi(a_k,a_l) + \phi(a_l,a_i)}{2}: a_l \in \Omega\big\}.$

\STATE $d(2)_\phi(a_k,a_i) \leftarrow \min \{ D(1)_\phi(a_k,a_i), D(2)_\phi(a_k,a_i) \}.$


\STATE Define $D^k(2) = (d(2)_\phi(a_k,a_i) )_i.$

\ENDWHILE

\STATE Define $D(2) = (D^k(2))_k.$

%
%

\end{algorithmic}
\caption{Computation of the order 2 approximation $D_\phi(2)$ to the metric for the graph of the fraud model.}\label{alg:algoritmRM}
\end{algorithm}

\vspace{0.2cm}

Suppose that the set $\Omega$ is finite, $| \Omega|=n \in \mathbb N$. Then we can represent $\phi$ by means of the matrix of its range, that is,
$$
\Phi=
\begin{bmatrix}
  \phi(a_1,a_1) & \cdots & \phi(a_1, a_n) \\
  \vdots & \ddots &  \vdots \\
  \phi(a_1, a_n) & \cdots & \phi(a_n, a_n)
 \end{bmatrix} =
\begin{bmatrix}
  0 & \cdots & \phi(a_1, a_n) \\
  \vdots & \ddots &  \vdots \\
  \phi(a_1, a_n) & \cdots & 0
 \end{bmatrix}.
$$

We will call the matrix $\Phi$ the proximity matrix associated to $\phi$.

\begin{example} \label{noeasyex}
Let us give some examples of proximity matrices.

\begin{itemize}

\item[1)]
The first easy example is given by the metric defined in Example \ref{easyex1}. In this case, the proximity function is just the Euclidean metric; that is, $\phi=d$. Consequently, the corresponding proximity matrix $\Phi$ is a metric matrix.

\vspace{0.2cm}

\item[2)]
Let us show two examples of such
construction that are not defined as in Example \ref{easyex1}.
For the first one,  consider $\Omega$ to be a group of individuals that are involved in a business, and the only information we have about it is written in a set  $M$ of documents (see Figure \ref{fig2}).
We want to design an analysis of the influence of the individuals in $\Omega$ in the business. In order to do this and as a first approximation, we consider the following proximity function.

Given $a,b \in \Omega$, take the number  of times $M_{a,b}$ that $a$ appears together with $b$ in a document. Define
$$
\phi_M(a,b)= \frac{ M-M_{a,b}}{M}, \quad a,b \in \Omega.
$$
Another step is needed to clean the matrix in case there are two different individuals in $\Omega$ such that they coincide in all the documents. In this case, they must be considered just as only one vertex of the corresponding complete graph. Note also that $M_{a,b}=1$ indicates that $a$ and $b$ are not appearing together in any document. However, this does not mean that the distance among them has necessarily the maximum value. The reason is that it may happen that $a$ appears in a document with $c$, and $c$ with $b$.
Using an adequate formula for $d_\phi$ ---for example the one given by the weights $W_i=1/i$ as in the particular case given above---, we can easily see that $d_{\phi_M} (a,b) < 1$.

\vspace{0.2cm}

\item[3)]  Let us show now a different way of defining a proximity function for the same problem.
Let $N= |\Omega|$ and assume that there are $M$ documents.  Take the $N \times M$-matrix $C$  of all the counts $C(a,m)$ of the times that the individual $a$ appears in document $m$. Normalize all the vectors appearing in the rows and compute $A=C \cdot C^T.$ It is an $N \times N$-matrix giving the ``cosine"  between elements of $\Omega$. If the element $A(a,b)$ is near to one, this means that they appear in almost the same documents; if it is near to $0$, it means that they are not appearing together.

Take the $N \times N$-matrix $\mathbb I_{N \times N}$ in which all the coefficients are equal to $1$, and compute $\Phi$ as
$$
\Phi= \mathbb I_{N \times N} - A.
$$
It  gives a different proximity matrix. Actually, this construction is the one that we will consider as standard, and will be developed with some detail in the next section. As we will show there, it can be interesting to combine different metrics, some/all of them defined by proximity functions.
\end{itemize}

\begin{figure}[h]
\includegraphics[scale=.5]{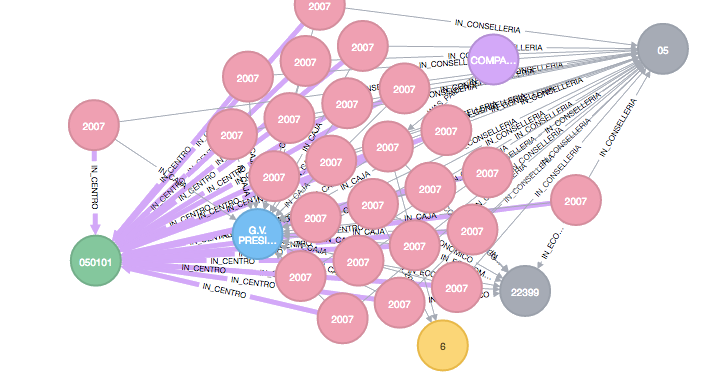}
\caption{``Hidden" representation of a model for fraud detection with no explicit labels for identifying the entities involved (Neo4j).
}
\label{fig2}
\end{figure}

\end{example}

\subsection{ Proximity functions defined by means of correlation matrices: the standard model.}

Let us fix a canonical version of the formulae explained in the previous parts of this section. It follows the lines of Example \ref{noeasyex}, 3).

\vspace{0.2cm}

\begin{itemize}

\item[A.]
Take a set of $N$ entities $\Omega$ and a set of $M$ properties ---quantifiable by means of positive real numbers--- associated to each element $a \in \Omega$.  Construct the set of $N$ vectors $v_a$ each of them containing the numerical value of the properties of a fixed $a \in \Omega$.

\vspace{0.2cm}

\item[B.]
Take the matrix $C$ defined in a way that  each row is such a vector $v_a$ after normalization, that is $v_a/\|v_a\|_2$ (we use the Euclidean norm for normalizing).

\vspace{0.2cm}

\item[C.]
Consider the correlation matrix  $A= C \cdot C^T$ and take  as  proximity matrix $\Phi= \mathbb I_{N \times N} - A.$ Note that it is symmetric.

\vspace{0.2cm}

\item[D.] Define the pseudo-metric $d_\phi$.

\vspace{0.2cm}

\item[E.] The final distance for performing the analysis is given by the formula
$$
d(a,b)= k \, \cdot \,\frac{\| v_a - v_b\|_2  }{\max \{\|v_c\|_2: c \in \Omega\}} + d_\phi(a,b), \quad a,b \in \Omega.
$$
Here, $k >0$ is a parameter for balancing both components of the distance. The first one allows to measure the size of the vectors, for detecting the case that one of its values has unexpected values (for example, a big ammount of money appearing in any coordinate of $v_a$). The second one provides information about the coincidence of coordinates, measuring it using the ``cosine distance".

\end{itemize}

\vspace{0.2cm}

Let us explain a complete example using this method.

\begin{example}
Consider $4$ companies, $a_i$, $i=1,...,4$, which have been hired by a public administration (PA) for doing similar services. We are interested in analyzing if there is any irregular behavior in any of them in 2017. We will show two problems and the models that correspond to each of them. We only have information regarding total amount of money that PA paid to each of them in 2017 and the number of contracts with each company.

\vspace{0.2cm}

\begin{itemize}

\item[(1)] \textit{Suppose that we want to analyze if the   total amount of money $x_i$, $i=1,...,4$, got by each company $a_i$ is either equally distributed among all the companies or we can find   different patterns  regarding that to divide the companies in two groups.} Let us use the procedure explained above. The ``vector of properties" $v_i$ for each company $a_i$ contains just a coordinate, $x_i$. The values (in thousands of euros) are $x_1=4$, $x_2=2$, $x_3=2$, and $x_4=1$. The ``Euclidean part" of the pseudo-distance is then given by
$$
d_E(a_i,a_j):= | x_i - x_j |/  \max \{4,2,1\} = | x_i - x_j |/4, \quad i,j =1,...,4.
$$
The part of the pseudo-metric given by the correlation matrix is given (after normalization) by the trivial formula
$$
\mathbb{I} - A= \mathbb{I} - C \cdot C^T= \mathbb{I} -
\begin{bmatrix}
  1 \\
 1 \\
  1 \\
1
 \end{bmatrix} \cdot
\begin{bmatrix}
  1 & 1 & 1  & 1
 \end{bmatrix}
=
\begin{bmatrix}
  0 & 0 & 0 & 0 \\
0 & 0 & 0 & 0 \\
 0 & 0 & 0 & 0 \\
0 & 0 & 0 & 0
 \end{bmatrix}.
$$
Thus, the final pseudo-metric contains only the Euclidean component, and is represented by the matrix
$$
d= d_E = \begin{bmatrix}
  0 & 1/2 & 1/2 & 3/4 \\
1/2 & 0 & 0 & 1/4 \\
 1/2 & 0 & 0 & 1/4 \\
3/4 & 1/4 & 1/4 & 0
 \end{bmatrix}.
$$
This pseudo-metric allow to separate the set of the four companies in two disjoint balls; indeed, for example
for $\varepsilon= 3/8$, we have
$$
B_{3/8}(a_1)= \{a_1\}, \quad \text{and} \quad B_{3/8}(a_2)= \{a_2,a_3,a_4\}.
$$
The local density in both companies, computed as the ratio among the number of elements in each ball and the radius of the ---one dimensional--- balls give the values for $\varepsilon = 3/8$,
$$
\text{Density}_{3/8}(a_1) = | B_{3/8}(a_1)|/(3/8) = 8/3
$$
and
$$
Density_{3/8}(a_2)=|B_{3/8}(a_2)|/(3/8)= 8
$$
$$
= Density_{3/8}(a_3)= Density_{3/8}(a_4).
$$
Therefore, it can be easily seen that there is a concentration of mass around $a_2$, and $a_1$ is surrounded by an area of low density. In this sense, it can be established that $a_1$ is an isolated point in terms of density, so it is suspicious of receiving an special treatment from PA. Of course, this fits with the fact that $a_1$ got the biggest amount of money in the contracts among all companies, and the difference with the other ones seems to be meaningful.


\vspace{0.2cm}

\item[(2)] Suppose now that we want to analyze a different aspect of the same problem, and we include in the investigation the number of contracts of each of the companies with PA in 2017 given the total amounts of money presented in (1). Now we consider two properties ---two-coordinates  vectors--- for each company: the first coordinate is the amount of money in (1), and the second one if the number of contracts. We have the following values: $a_1= (4,3)$, $a_2=(2,1)$, $a_3=(2,2)$, and $a_4=(1,1)$. For the aim of simplicity, we identify the companies $a_i$ with its two-coordinates property vectors $(x_i,y_i)$, $i,j=1,...,4$.

As in the previous case, we have that the Euclidean part of the distance is given by the Euclidean norm divided by the maximum of the norms, that is, taking into account that
$$
\|a_1\|= 5, \,\,\,\,\,\,\, \|a_2\|= \sqrt 5, \,\,\,\,\,\,\, \|a_3\|= 2 \sqrt 2, \,\,\,\,\,\,\, \|a_4\|= \sqrt 2,
$$
we get
$$
d_E(a_i,a_j)= \|(x_i,y_i)-(x_j,y_j)\|_2 / \max\{ \|a_i\|_2\} = \frac{\|(x_i,y_i)-(x_j,y_j)\|_2}{5}.
$$
This gives the metric matrix
$$
D_E=
\begin{bmatrix}
 0 &  \frac{2 \sqrt{2}}{5} & \frac{\sqrt 5}{5} & \frac{\sqrt 13}{5}  \\
\frac{2 \sqrt 2}{5} & 0 & \frac{1}{5} & \frac{1}{5} \\
\frac{\sqrt 5}{5} & \frac{1}{5} & 0 & \frac{\sqrt 2}{5} \\
\frac{\sqrt 13}{5} & \frac{1}{5} & \frac{\sqrt 2}{5} & 0
 \end{bmatrix} \sim
\begin{bmatrix}
 0 &  0.566 & 0.447 & 0.721  \\
0.566 & 0 & 0.2 & 0.2  \\
0.447 & 0.2 & 0 & 0.283 \\
0.721 & 0.2 & 0.283 & 0
 \end{bmatrix} .
$$
On the other hand,
the proximity  matrix given by the correlation matrix is in this case meaningful. Indeed,
$$
\mathbb{I} - A= \mathbb{I} - C \cdot C^T
$$
$$
= \mathbb{I} -
\begin{bmatrix}
  \frac{4}{5} & \frac{3}{5} \\
   \frac{2}{ \sqrt 5} & \frac{1}{ \sqrt 5} \\
    \frac{1}{\sqrt 2} & \frac{1}{\sqrt 2} \\
     \frac{1}{\sqrt 2} & \frac{1}{\sqrt 2}
 \end{bmatrix} \cdot
\begin{bmatrix}
 \frac{4}{5} &
   \frac{2}{ \sqrt 5} &
    \frac{1}{\sqrt 2} &
     \frac{1}{\sqrt 2} \\
 \frac{3}{5} &
  \frac{1}{ \sqrt 5} &
  \frac{1}{\sqrt 2}
   & \frac{1}{\sqrt 2}
 \end{bmatrix}
\sim
\begin{bmatrix}
  0 & 0.016 & 0.010 & 0.010 \\
0.016 & 0 & 0.051 & 0.051 \\
 0.01 & 0.051 & 0 & 0 \\
0.01 & 0.051 & 0 & 0
 \end{bmatrix}.
$$
This is not a pseudo-metric matrix: note for example that
$$
0.051= \phi(a_2,a_3) > \phi(a_2, a_1) + \phi(a_1,a_3) = 0.016+ 0.010.
$$
In order to provide a pseudo-metric $d_\phi$ preserving as much as possible  the size of the coefficients of  the original proximity matrix, we use the formula given in
Section \ref{forcanon} with all weights equal to one, that is $W_i=1$, $i=1,...,4$. We obtain the pseudo-metric matrix
$$
d_\phi \sim \begin{bmatrix}
  0 & 0.016 & 0.010 & 0.010 \\
0.016 & 0 & 0.026 & 0.026 \\
 0.01 & 0.026 & 0 & 0 \\
0.01 & 0.026 & 0 & 0
 \end{bmatrix}.
$$
The final distance matrix is then given by
$$
D= \lambda \, D_E + d_\phi.
$$
This can be used for the analysis in the same way that was made in (1). However, if we look at the two matrices separately, we get more information about the problem.

\begin{itemize}
\item[(i)]
Using $d_E$, we find again a similar conclusion as the one we got in (1): the first company is the only element in the ball of radius $\varepsilon = 0.4$. However, a ball of the same size $\varepsilon = 0.4$ centered in $a_2$ contains the rest of the elements, $a_2, a_3 $ and $a_4$. The same argument that was used in (1)  using $Density_{0.4}$ provides the same conclusion as in (1).

\item[(ii)]  The second matrix ---associated to $d\phi$--- centers the attention in other element. In this case, the ball $B_{0.015}(a_2)$ only contains $a_2$. However, the ball $B_{0.015}(a_1)$ contains $a_1,$ $a_3$ and $a_4.$ The density around $a_2$ is then smaller than density around $a_1$, $a_3$ and $a_4$. This means that $a_2$ would be suspicious of getting a special treatment, or at least that its hiring pattern is not the same. Note that this pseudo-metric measures the proportion between amount of money and number of contracts. The result shows that the company $a_2$ is not following the same proportion, what means that the money associated to each contract is different. This may be just by chance, but also would indicate that there is someone interested in manipulating the standard hiring procedure, and so it would be suspicious of fraud.

\end{itemize}

\end{itemize}

\end{example}

\vspace{0.2cm}

\section{Final remarks: applications of the model to detect irregular behavior of elements in a network}

In this section, and to finish the paper, we give some open ideas for applying the tools that we have shown.
We can consider the following problems, which could be solved by  applying  our metric graph structure.

\begin{itemize}

\item The first and canonical one: \textit{given an entity $a \in \Omega$, find the rest of the elements of $\Omega$ that are near }(distance less than $\varepsilon >0$). This is the first step of the neighborhood analysis that allow to compute a density map for searching anomalous behaviors. But it also gives a primary information, providing  the entities that are close to a given one $a$ with respect to the criterium used for the construction of the proximity function.

\item
\textit{Degree of dependence of the ``graph distance" on a single element $a \in \Omega$}: this is the norm of the difference of the submatrix $D_a$ that is obtained by eliminating the row and column associated to $a$ in the distance matrix $D$, and the distance matrix $D(-a)$ that is computed when the set considered is $\Omega \setminus \{a\}$ instead of $\Omega.$ If the value is small, this means that the element $a$ is not relevant for the graph, it is not really connected or it is not giving easy paths for other entities to be connected.

\item \textit{Optimization}: given a vertex $a \in \Omega$ and a subset $S \subset \Omega$, find the element(s) $b$  in $S$ such that $d_\phi(a,b)$ attains its minimum.

\item \textit{A singular-values-type method for determining the classes of equivalence  of entities in the space having the same behavior}, in the sense that they appear in the same documents. We use
the matrix $A$ defined in Example \ref{noeasyex}, 3).  Consider the individuals $a_1$ to $a_n$  and suppose they are appearing in the same documents, and they are the only ones appearing in these documents. Then we can write the
vectors of the matrix $A$ corresponding to these individuals as
$$
1/\sqrt n \, (1,1,...,1,0,...0),
$$
where the coefficient equal to $1$ appears in the $n$ first positions. On the other hand, the other individuals have coefficients that are all of them $0$ in the first $n$ positions (check that, this is a consequence of the construction of $A$ based in the fact that they are appearing in disjoint documents). When the corresponding submatrix is diagonalized, we obtain an eigenvalue that is not zero and other one that is $0$, that has multiplicity   $n-1$. Therefore, there is only one document-appearing behavior, the rest only repeat the behavior of the first individual. Of course, we rarely are going to find this pure behavior, and so we use the ideas of the singular values method for giving the ``almost zero" version.

For doing this, compute the eigenvalues of the matrix $\{\lambda_i: 1 \le i \le m \}.$ Fix $\varepsilon >0$, and take the subspace $S_\varepsilon$ generated by the eigenvectors associated to the  eigenvalues $\lambda_i < \varepsilon.$ Write the equation
$A= U^T \Delta U$ ($U$ is the matrix of change of basis) and compute the vectors $v_a = (0,\cdots,1, \cdots 0)$ representing the elements $a \in \Omega$ that satisfy that  $U v_a$ is in $S_\varepsilon$. This is the set
that can be eliminated from the original set $\Omega$, since they have an equivalent behavior that any of the ones for which $\lambda_i \ge \varepsilon.$

\end{itemize}

\section{Conclusions}

We have presented a new framework for constructing decision support systems for financial anti-fraud analysis. It consists of a graph structure $\Omega$ together with a distance defined on it,
that models the relations among the entities involved in the analysis. We have shown how to define these metrics by means of examples and applications.

Our main methodological hypothesis has also been established. Together with the metric structure, a measure acting in the $\sigma$-algebra generated by $\Omega$ is considered
in order to define a function that allows to measure the density of the neighborhoods of the elements of the model. Our main axiom claims that a (group of) entity(ies) is suspected of committing fraud
 whenever there is an anomalous density --meaningfully bigger or smaller than the mean--- in his neighborhood. Concrete models and examples for explaining this idea are presented.

%
%
%

\end{document}